\documentclass[]{article}
\usepackage[english]{babel}
\usepackage[autostyle,italian=guillemets]{csquotes}
\usepackage[backend=biber,style=numeric]{biblatex}
\usepackage{amssymb}
\usepackage{mathtools}
\usepackage{amsthm}
\usepackage[colorlinks=true, allcolors=blue]{hyperref}
\usepackage[section]{algorithm}
\usepackage{algpseudocode}
\usepackage{listingsutf8}
\usepackage{color}
\usepackage{pgfplots}

\addbibresource{square_root_algorithm.bib}

\theoremstyle{definition}
\newtheorem{definition}{Definition}[section]

\theoremstyle{plain}

\newtheorem{theorem}{Theorem}[section]

\newtheorem{proposition}{Proposition}[section]

\newcommand{\N}{\mathbb{N}}
\newcommand{\Z}{\mathbb{Z}}
\newcommand{\R}{\mathbb{R}}
\newcommand{\Q}{\mathbb{Q}}

\definecolor{javagreen}{rgb}{0.25,0.5,0.35} 
\definecolor{javapurple}{rgb}{0.5,0,0.35} 
\definecolor{javadocblue}{rgb}{0.25,0.35,0.75} 

\lstset{
	language=Java,
	basicstyle=\ttfamily,
	keywordstyle=\color{javapurple}\bfseries,
	stringstyle=\color{blue},
	commentstyle=\color{javagreen},
	numberstyle=\tiny\color{black},
	numbersep=10pt,
	tabsize=4,
	showspaces=false,
	showstringspaces=false,
	breaklines=true,
	keepspaces=true,
	frame=single,
	extendedchars=true,
}

\title{A square root algorithm faster than Newton's method for multiprecision numbers, using floating-point arithmetic}
\author{Fabio Romano}

\begin{document}

\maketitle

\begin{abstract}
	In this paper, an optimized version of classical Bombelli's algorithm for computing integer square roots is presented. In particular, floating-point arithmetic is used to compute the initial guess of each digit of the root, following similar ideas to those used in \cite[4.3.1]{knuth:taocp2} for division. A program with an implementation of the algorithm in Java is also presented, and its running time is compared with that of the algorithm provided by the Java standard library, which uses the Newton's method. From tests, the algorithm presented here turns out to be much faster.
\end{abstract}

\section{Introduction: Newton vs Bombelli}

\blockquote[Henry S. Warren Jr.]{Ah, the elusive square root, \\ it should be a cinch to compute. \\ But the best we can do \\ Is use powers of 2 \\ And iterate the method of Newt!\\}

Newton's method for multiprecision numbers uses the following version of Babylonian formula (cf. \cite[11.1]{warren:hackers_delight}):
\[
y_{k+1} = \left\lfloor \frac{1}{2} \left( y_k +  \left\lfloor \frac{x}{y_k} \right\rfloor \right) \right\rfloor
\]

Since the formula involves an integer division of the radicand $x$ by the root's approximation $y_k$, its computation requires $O(n^c)$ running time, where $n$ is the precision (i.e., the number of digits) of $x$, and $c \ge \log_2 3$, depending on which algorithm is used for division (cf. \cite{burnikel:fast_division}).

Since Newton's method has quadratic convergence, which means that the correct digits roughly double with each iteration, and since the square root has about half the digits of the radicand, then the running time of Newton's method is $O(n^c \log n) = C_N \cdot n^c \log n$, for some $C_N > 0$.

On the other hand, Bombelli's algorithm, which in \cite[11.1]{warren:hackers_delight} is denoted as the "shift-and-subtract algorithm", has a running time of $O(n^2) = C_B \cdot n^2$, for some $C_B > 0$, where again $n$ is the precision of the radicand.

Hence, by asymptotic analysis, Bombelli's algorithm seems to be faster than Newton's method. However, currently the most used algorithm for computing the integer square roots of multiprecision numbers is Newton's method.

The reason behind this fact is that Bombelli's algorithm gives a condition that the digit has to satisfy, but does not specify how to find that digit, so indeed the naive approach is only able to guess one bit of a digit per single evaluation of the condition, causing it to fall back into binary search, and thus making $C_B$ too big. The reasons behind these statements will be clarified after formalizing the algorithm.

As a consequence, the constant $n_0 > 0$ such that the following holds $\forall n \ge n_0$:
\[
	C_B \cdot n^2 \le C_N \cdot n^c \log n
\]
is too large for practical applications.

So, the aim of this paper is precisely to greatly decrease the constant $C_B$ by guessing a whole digit per single evaluation of the condition instead of a single bit, through a constant number of machine operations, in order to make Bombelli's algorithm far more efficient than Newton's method in practice.

\section{Basic definitions and properties}

Of course, in order to present the algorithm, we first need to define what the square root is. Here, zero is supposed to be in $\N$.

\begin{definition}[Square root]
	Given a number $x \in \N$, the (integer) square root function, denoted by $\lfloor \sqrt x \rfloor$, returns the number $y \in \N$ such that:
	\[
		y^2 \le x < (y + 1)^2
	\]
\end{definition}

\begin{definition}[Remainder of the square root]
	Given a number $x \in \N$, the remainder of its square root is the difference $x - \lfloor \sqrt x \rfloor^2$.
\end{definition}

\begin{proposition}\label{prop:rem_upper_bound}
	For each $x \in \N$, it holds that $x - \lfloor \sqrt x \rfloor^2 \le 2\lfloor \sqrt x \rfloor$.
\end{proposition}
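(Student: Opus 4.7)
The plan is to derive the bound directly from the defining inequalities of the integer square root, exploiting the fact that both sides of the relevant inequality are integers.

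First, I would let $y = \lfloor \sqrt x \rfloor$ and invoke the definition to obtain the two-sided bound
\[
y^2 \le x < (y+1)^2.
\]
Subtracting $y^2$ from the right-hand inequality yields
\[
x - y^2 < (y+1)^2 - y^2 = 2y + 1.
\]

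The key observation is then that $x - y^2$ is a natural number (it is the remainder of the square root, hence in $\N$ by definition and by the left inequality $y^2 \le x$), and $2y + 1$ is also an integer. A strict inequality between integers $a < b$ implies $a \le b - 1$, so from $x - y^2 < 2y + 1$ we conclude
\[
x - y^2 \le 2y = 2\lfloor \sqrt x \rfloor,
\]
which is exactly the claim.

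There is no real obstacle here: the proof is a one-line expansion of $(y+1)^2 - y^2$ followed by the standard ``strict inequality between integers becomes non-strict after subtracting one'' step. The only thing worth stating explicitly is the integrality of both sides, since otherwise one would only get the weaker strict bound $x - y^2 < 2\lfloor \sqrt x \rfloor + 1$.
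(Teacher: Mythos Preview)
Your proof is correct and follows essentially the same approach as the paper: expand $(y+1)^2$, subtract $y^2$, and use integrality to turn the strict inequality $x - y^2 < 2y + 1$ into $x - y^2 \le 2y$. The only cosmetic difference is that you introduce the shorthand $y = \lfloor \sqrt x \rfloor$ and are slightly more explicit about why integrality of both sides is needed.
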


\begin{proof}\mbox{}
	
	Since, by definition, $x < (\lfloor \sqrt x \rfloor + 1)^2$:
	\\$\implies x < \lfloor \sqrt x \rfloor^2 + 2\lfloor \sqrt x \rfloor + 1$
	\\$\implies x - \lfloor \sqrt x \rfloor^2 < 2\lfloor \sqrt x \rfloor + 1$
	\\$\implies x - \lfloor \sqrt x \rfloor^2 \le 2\lfloor \sqrt x \rfloor$, since $\lfloor \sqrt x \rfloor$ is an integer.
\end{proof}

\section{Bombelli's algorithm}

\begin{algorithm}[H]
	\caption{Shift-and-subtract}
	\label{alg:sqrt}
	\begin{algorithmic}[1]
		\Require a non-negative $2n$-place integer $X = (x_{2n-1} \dots x_0)_b$, where $b \ge 2$ is the radix of the number.
		\Ensure $\Call{Sqrt}{X} = (Y, R)$, where $Y = (y_{n-1} \dots y_0)_b$ is the square root of $X$ and $R = (r_n \dots r_0)_b$ is its remainder.
		
		\Function{Sqrt}{$X$}
			\If{$X = 0$}
				\State \Return $(0, 0)$
			\EndIf
			\\ \Comment{First iteration}
			\State $y_{n-1} \gets \max\{\beta \in \N \mid \beta^2 \le (x_{2n-1} x_{2n-2})_b\}$
			\State $Y_{n-1} \gets y_{n-1}$
			\State $R_{n-1} \gets (x_{2n-1} x_{2n-2})_b - y_{n-1}^2$
			
			\\ \Comment{Next iterations}
			\For{$i \gets n - 2$ \textbf{downto} $0$}
				\State $y_i \gets \max\{\beta \in \N \mid 2b Y_{i+1} \beta + \beta^2 \le (R_{i+1} x_{2i+1} x_{2i})_b\}$
				\State $Y_i \gets (Y_{i+1} y_i)_b$
				\State $R_i \gets (R_{i+1} x_{2i+1} x_{2i})_b - (2b Y_{i+1} y_i + y_i^2)$
			\EndFor
			
			\State \Return $(Y_0, R_0)$
		\EndFunction
	\end{algorithmic}
\end{algorithm}

The precision of the remainder is $n + 1$ due to proposition \ref{prop:rem_upper_bound}.

\begin{theorem}[Soundness]
	Given an input to the algorithm \ref{alg:sqrt} that satisfies its requirements, the output will satisfy its assurance.
\end{theorem}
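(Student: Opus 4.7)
The plan is to proceed by downward induction on $i$, for $i = n-1, n-2, \ldots, 0$, maintaining the joint invariant that, setting $X_i := (x_{2n-1} \cdots x_{2i})_b$ (the prefix of length $2(n-i)$ of $X$), one has $Y_i = \lfloor \sqrt{X_i} \rfloor$, $R_i = X_i - Y_i^2$, $Y_i$ fits in $n-i$ base-$b$ digits, and the partial digit $y_i$ produced at step $i$ satisfies $0 \le y_i < b$, so that the concatenation $Y_i = (Y_{i+1}\, y_i)_b$ written by the algorithm is actually well-defined. The desired conclusion is then just the $i=0$ instance of the invariant.

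For the base case $i = n-1$, the definition of $y_{n-1}$ as the maximal $\beta$ with $\beta^2 \le X_{n-1}$ is literally the definition of $\lfloor\sqrt{X_{n-1}}\rfloor$; since $X_{n-1} < b^2$ this is a single digit, and the formula for $R_{n-1}$ is immediate. For the inductive step I would first rewrite the algorithm's condition in a more geometric form. Using $X_i = b^2 X_{i+1} + (x_{2i+1} x_{2i})_b$ together with $R_{i+1} = X_{i+1} - Y_{i+1}^2$ from the inductive hypothesis, a direct expansion yields
\[
(R_{i+1}\, x_{2i+1}\, x_{2i})_b \;=\; b^2 R_{i+1} + (x_{2i+1} x_{2i})_b \;=\; X_i - b^2 Y_{i+1}^2,
\]
so the inequality $2 b Y_{i+1} \beta + \beta^2 \le (R_{i+1}\, x_{2i+1}\, x_{2i})_b$ becomes $(b Y_{i+1} + \beta)^2 \le X_i$. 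Hence $y_i$ is characterized as the unique non-negative integer such that $b Y_{i+1} + y_i$ is the largest integer $\ge b Y_{i+1}$ whose square does not exceed $X_i$.

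The crux, and the step I expect to be the main obstacle, is then to prove the sandwich $b Y_{i+1} \le \lfloor \sqrt{X_i} \rfloor < b(Y_{i+1}+1)$: the left inequality guarantees that the above maximum is attained exactly at $\lfloor \sqrt{X_i} \rfloor$, so $Y_i = \lfloor\sqrt{X_i}\rfloor$; the right inequality forces $y_i = \lfloor\sqrt{X_i}\rfloor - b Y_{i+1} < b$, preserving the digit-range part of the invariant. The left inequality follows from $b^2 X_{i+1} \le X_i$, which gives $b Y_{i+1} \le b \sqrt{X_{i+1}} \le \sqrt{X_i}$, and integrality of $b Y_{i+1}$ lifts this to $\lfloor\sqrt{X_i}\rfloor$. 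The right inequality uses the defining upper bound $X_{i+1} < (Y_{i+1}+1)^2$ (the ingredient also behind Proposition~\ref{prop:rem_upper_bound}), which yields $X_i < b^2(X_{i+1}+1) \le b^2(Y_{i+1}+1)^2$ and hence $\sqrt{X_i} < b(Y_{i+1}+1)$.

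Once $Y_i = \lfloor\sqrt{X_i}\rfloor$ is secured, the identity $R_i = X_i - b^2 Y_{i+1}^2 - 2 b Y_{i+1} y_i - y_i^2 = X_i - (b Y_{i+1} + y_i)^2 = X_i - Y_i^2$ is a one-line algebraic verification, and Proposition~\ref{prop:rem_upper_bound} applied at level $i$ gives $R_i \le 2 Y_i < 2 b^{n-i}$, confirming that $R_i$ fits in the promised $n-i+1$ base-$b$ digits. Specialising at $i = 0$ then delivers exactly $Y_0 = \lfloor \sqrt X \rfloor$ and $R_0 = X - Y_0^2$, which is the claimed postcondition of the algorithm.
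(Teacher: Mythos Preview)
Your proposal is correct and follows essentially the same approach as the paper: the same loop invariants $Y_i=\lfloor\sqrt{X_i}\rfloor$ and $R_i=X_i-Y_i^2$, the same key equivalence $2bY_{i+1}\beta+\beta^2\le (R_{i+1}x_{2i+1}x_{2i})_b\iff (bY_{i+1}+\beta)^2\le X_i$, and the same bound $X_i<b^2(Y_{i+1}+1)^2$ to force $y_i<b$. The only cosmetic difference is that you package the two consequences (that $Y_i=\lfloor\sqrt{X_i}\rfloor$ and that $y_i$ is a valid digit) into a single ``sandwich'' $bY_{i+1}\le\lfloor\sqrt{X_i}\rfloor<b(Y_{i+1}+1)$, whereas the paper derives them in two separate passes; your formulation is arguably a touch cleaner, but the underlying computations are identical.
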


\begin{proof}
	To prove the soundness, the following invariants are used:
	\begin{equation}
		Y_i^2 \le (x_{2n-1} \dots x_{2i})_b < (Y_i + 1)^2
	\end{equation}
	\begin{equation}
		R_i = (x_{2n-1} \dots x_{2i})_b - Y_i^2
	\end{equation}
	
	\begin{flushleft}
		\textbf{Initialization:} after the execution of line 8, the invariants hold for $i = n - 1$.
	\end{flushleft}
	
	Let $S := \{\beta \in \N \mid \beta^2 \le (x_{2n-1} x_{2n-2})_b\}$. Since $S$ is finite and non-empty and $y_{n-1} = \max(S)$, then $y_{n-1}$ has a (unique) value, so:
	\\$\implies y_{n-1}^2 \le (x_{2n-1} x_{2n-2})_b < (y_{n-1} + 1)^2$
	\\$\implies Y_{n-1}^2 \le (x_{2n-1} x_{2(n-1)})_b < (Y_{n-1} + 1)^2$, since $Y_{n-1} = y_{n-1}$
	\\$\implies Y_{i}^2 \le (x_{2n-1} x_{2i})_b < (Y_{i} + 1)^2$, for $i = n - 1$
	\\Thus, the first invariant holds.
	
	Furthermore, we have:
	\[
	y_{n-1}^2
	\le (x_{2n-1} x_{2n-2})_b
	\le b(b - 1) + (b - 1)
	= b^2 - 1
	\]
	Hence $y_{n-1} < b$, so its value can be a digit.
	
	For the second invariant, since $R_{n-1} = (x_{2n-1} x_{2n-2})_b - y_{n-1}^2$:
	\\$\implies R_{n-1} = (x_{2n-1} x_{2(n-1)})_b - Y_{n-1}^2$
	\\$\implies R_{i} = (x_{2n-1} x_{2i})_b - Y_{i}^2$
	\\Thus, the second invariant holds.
	
	\begin{flushleft}
		\textbf{Maintenance:} after each iteration of the for loop at line 10, the invariants still hold.
	\end{flushleft}
	
	By inductive hypothesis, the invariants hold for $i + 1$:
	\begin{equation}\label{eq:inv_1_plus1}
		Y_{i+1}^2 \le (x_{2n-1} \dots x_{2(i+1)})_b < (Y_{i+1} + 1)^2
	\end{equation}
	\begin{equation}\label{eq:inv_2_plus1}
		R_{i+1} = (x_{2n-1} \dots x_{2(i+1)})_b - Y_{i+1}^2
	\end{equation}

	Consider the following chain of equivalences, for any $\beta \in \N$:
	\begin{equation}\label{eq:equiv_digit_def}
		\begin{split}
			&(b Y_{i+1} + \beta)^2
			\le (x_{2n-1} \dots x_{2i})_b \\
			&\iff (b Y_{i+1} + \beta)^2
			\le b^2(x_{2n-1} \dots x_{2i+2})_b + (x_{2i+1} x_{2i})_b \\
			&\iff (b Y_{i+1} + \beta)^2
			\le b^2(x_{2n-1} \dots x_{2(i+1)})_b + (x_{2i+1} x_{2i})_b \\
			&\iff (b Y_{i+1} + \beta)^2
			\le b^2(Y_{i+1}^2 + R_{i+1}) + (x_{2i+1} x_{2i})_b, \text{ by \eqref{eq:inv_2_plus1}} \\
			&\iff b^2 Y_{i+1}^2 + 2b Y_{i+1} \beta + \beta^2
			\le b^2 Y_{i+1}^2 + b^2 R_{i+1} + (x_{2i+1} x_{2i})_b \\
			&\iff 2b Y_{i+1} \beta + \beta^2
			\le b^2 R_{i+1} + (x_{2i+1} x_{2i})_b \\
			&\iff 2b Y_{i+1} \beta + \beta^2
			\le (R_{i+1} x_{2i+1} x_{2i})_b
		\end{split}
	\end{equation}
	
	So, we have:
	\begin{equation}\label{eq:inv_1_left}
		2b Y_{i+1} \beta + \beta^2 \le (R_{i+1} x_{2i+1} x_{2i})_b
		\iff (b Y_{i+1} + \beta)^2 \le (x_{2n-1} \dots x_{2i})_b
	\end{equation}
	
	which is equivalent to:
	\begin{equation}\label{eq:inv_1_right}
		(R_{i+1} x_{2i+1} x_{2i})_b < 2b Y_{i+1} \beta + \beta^2
		\iff (x_{2n-1} \dots x_{2i})_b < (b Y_{i+1} + \beta)^2
	\end{equation}
	
	Now, let $S := \{\beta \in \N \mid 2b Y_{i+1} \beta + \beta^2 \le (R_{i+1} x_{2i+1} x_{2i})_b\}$.
	
	By \eqref{eq:inv_1_plus1} and \eqref{eq:inv_2_plus1}, $R_{i+1} = (x_{2n-1} \dots x_{2(i+1)})_b - Y_{i+1}^2 \ge 0$, so, if $\beta = 0$, then $2b Y_{i+1} \beta + \beta^2 \le (R_{i+1} x_{2i+1} x_{2i})_b$ holds, thus $S$ is non-empty, and since $y_i = \max(S)$, then $y_i$ has a (unique) value, so:
	\\$\implies 2b Y_{i+1} y_i + y_i^2 \le (R_{i+1} x_{2i+1} x_{2i})_b < 2b Y_{i+1} (y_i + 1) + (y_i + 1)^2$, by definition of $S$
	\\$\implies (b Y_{i+1} + y_i)^2 \le (x_{2n-1} \dots x_{2i})_b < (b Y_{i+1} + y_i + 1)^2$, by \eqref{eq:inv_1_left} and \eqref{eq:inv_1_right}
	\\$\implies Y_i^2 \le (x_{2n-1} \dots x_{2i})_b < (Y_i + 1)^2$, since $Y_i = (Y_{i+1} y_i)_b$
	\\Thus, the first invariant holds.
	
	Furthermore, we have:
	\[
		\begin{split}
			(b Y_{i+1} + y_i)^2
			&\le (x_{2n-1} \dots x_{2i})_b \\
			&= b^2 (x_{2n-1} \dots x_{2i+2})_b + (x_{2i+1} x_{2i})_b \\
			&\le b^2 (x_{2n-1} \dots x_{2i+2})_b + b(b - 1) + (b - 1) \\
			&= b^2 (x_{2n-1} \dots x_{2(i+1)})_b + b^2 - 1 \\
			&\le b^2 [(Y_{i+1} + 1)^2 - 1] + b^2 - 1, \text{ by } \eqref{eq:inv_1_plus1} \\
			&= b^2 (Y_{i+1} + 1)^2 - 1 \\
			&< b^2 (Y_{i+1} + 1)^2 \\
			&= (b Y_{i+1} + b)^2
		\end{split}
	\]
	Hence $y_i < b$, so its value can be a digit.
	
	For the second invariant:
	\[
		\begin{split}
			(x_{2n-1} \dots x_{2i})_b - Y_i^2
			&= (x_{2n-1} \dots x_{2i})_b - (b Y_{i+1} + y_i)^2, \text{ since } Y_i = (Y_{i+1} y_i)_b \\
			&= b^2 (x_{2n-1} \dots x_{2i+2})_b + (x_{2i+1} x_{2i})_b - (b^2 Y_{i+1}^2 + 2b Y_{i+1} y_i + y_i^2) \\
			&= b^2 [(x_{2n-1} \dots x_{2(i+1)})_b - Y_{i+1}^2] + (x_{2i+1} x_{2i})_b - (2b Y_{i+1} y_i + y_i^2) \\
			&= b^2 R_{i+1} + (x_{2i+1} x_{2i})_b - (2b Y_{i+1} y_i + y_i^2), \text{ by \eqref{eq:inv_2_plus1}} \\
			&= (R_{i+1} x_{2i+1} x_{2i})_b - (2b Y_{i+1} y_i + y_i^2) \\
			&= R_i, \text{ by line 13} \\
		\end{split}
	\]
	Thus, the second invariant holds.
	
	\begin{flushleft}
		\textbf{Termination:} after the last iteration of the for loop at line 10, the invariants hold for $i = 0$.
	\end{flushleft}
	
	\[
		Y_0^2 \le (x_{2n-1} \dots x_0)_b < (Y_0 + 1)^2
	\]
	\[
		R_0 = (x_{2n-1} \dots x_0)_b - Y_0^2
	\]
	
	By the requirements of the algorithm, $X = (x_{2n-1} \dots x_0)_b$, so $Y_0$ and $R_0$ satisfy the definition of square root of $X$ and remainder respectively.
\end{proof}

\subsection{Time complexity}

As we can see, the problematic part for the running time of the algorithm lies in line 11: using binary search to find the value of the digit $y_i$, almost always it involves evaluating the inequality $2b Y_{i+1} \beta + \beta^2 \le (R_{i+1} x_{2i+1} x_{2i})_b$ as many times as the number of bits of the radix $b$. Moreover, the time for evaluating the inequality, which is linear in the number of digits of the partial root $Y_{i+1}$, grows as the quantity $2b Y_{i+1} \beta + \beta^2$ approaches $(R_{i+1} x_{2i+1} x_{2i})_b$.

So, even if the time complexity is $O(n^2)$ in the precision of $X$, the algorithm is still extremely inefficient; we would like to have a method for guessing digits in constant time, such that the only operation that requires linear time during a single iteration is computing the remainder.

\section{Guessing the digits}

\subsection{The basic formula}

The general idea for guessing the square root digits is actually quite simple. We start from the definition for computing a digit, given by the algorithm:
\[
	y_i := \max\{\beta \in \N \mid 2b Y_{i+1} \beta + \beta^2 \le (R_{i+1} x_{2i+1} x_{2i})_b\}
\]

By the chain of equivalences \eqref{eq:equiv_digit_def}, this definition is the same as:
\[
	y_i = \max\{\beta \in \N \mid (b Y_{i+1} + \beta)^2
	\le b^2 Y_{i+1}^2 + b^2 R_{i+1} + (x_{2i+1} x_{2i})_b\}
\]

Now, consider the inequality in the definition. If we turn it into an equation and then we solve it for $\beta$, its integer part will be exactly the value we were looking for $y_i$, that is:
\[
	\begin{split}
		&(b Y_{i+1} + \beta)^2
		= b^2 Y_{i+1}^2 + b^2 R_{i+1} + (x_{2i+1} x_{2i})_b \\
		&\iff (b Y_{i+1} + \beta)^2
		= (b Y_{i+1})^2 + (R_{i+1} x_{2i+1} x_{2i})_b \\
		&\iff b Y_{i+1} + \beta
		= \sqrt{(b Y_{i+1})^2 + (R_{i+1} x_{2i+1} x_{2i})_b}, \text{ since } b Y_{i+1} + \beta \ge 0 \\
		&\iff \beta
		= \sqrt{(b Y_{i+1})^2 + (R_{i+1} x_{2i+1} x_{2i})_b} - b Y_{i+1}
	\end{split}
\]

Thus, the formula we were looking for to guess $y_i$ is:
\begin{equation}\label{eq:guess_formula}
	y_i = \left\lfloor \sqrt{(b Y_{i+1})^2 + (R_{i+1} x_{2i+1} x_{2i})_b} - b Y_{i+1} \right\rfloor
\end{equation}

But this is not yet the formula that will be used in the final algorithm; the reason for this will be clarified in the next section.

\subsection{Evaluate the formula using floating-point arithmetic}

Obviously, multiprecision arithmetic can't be used to evaluate the just derived formula, since it requires to compute the square root of a multiprecision number, but we are already building an algorithm to compute square roots of multiprecision numbers!

To get around this vicious circle, we can use floating-point numbers, and consequently use a fast implementation of the square root in floating-point arithmetic, often already included in the computer hardware.

However, the formula as it appears in the form of \eqref{eq:guess_formula} could lead to large rounding errors, resulting from the subtraction of two numbers that have close magnitudes. Therefore, the formula must be rewritten through the following algebraic transformations:
\[
	\begin{split}
		y_i
		&= \left\lfloor \sqrt{(b Y_{i+1})^2 + (R_{i+1} x_{2i+1} x_{2i})_b} - b Y_{i+1} \right\rfloor \\
		&= \left\lfloor \left( \sqrt{(b Y_{i+1})^2 + (R_{i+1} x_{2i+1} x_{2i})_b} - b Y_{i+1} \right) \cdot \frac{\sqrt{(b Y_{i+1})^2 + (R_{i+1} x_{2i+1} x_{2i})_b} + b Y_{i+1}}{\sqrt{(b Y_{i+1})^2 + (R_{i+1} x_{2i+1} x_{2i})_b} + b Y_{i+1}} \right\rfloor \\
		&= \left\lfloor \frac{(b Y_{i+1})^2 + (R_{i+1} x_{2i+1} x_{2i})_b - (b Y_{i+1})^2}{\sqrt{(b Y_{i+1})^2 + (R_{i+1} x_{2i+1} x_{2i})_b} + b Y_{i+1}} \right\rfloor \\
		&= \left\lfloor \frac{(R_{i+1} x_{2i+1} x_{2i})_b}{\sqrt{(b Y_{i+1})^2 + (R_{i+1} x_{2i+1} x_{2i})_b} + b Y_{i+1}} \right\rfloor \\
	\end{split}
\]

Of course, the denominator is non-zero, since the leading digit of the partial root $Y_{i+1}$ is always non-zero.
Thus, this is the final form of the formula:
\begin{equation}\label{eq:guess_fromula_final}
	y_i = \left\lfloor \frac{(R_{i+1} x_{2i+1} x_{2i})_b}{\sqrt{(b Y_{i+1})^2 + (R_{i+1} x_{2i+1} x_{2i})_b} + b Y_{i+1}} \right\rfloor
\end{equation}

\section{Error analysis}

\subsection{Definitions for error analysis}

\begin{definition}[Floating-point numbers]
	Given $p \ge 2$ and $e_{min} < 0 < e_{max}$, the set $\mathcal F \subseteq \Q$ of non-negative floating-point numbers is defined as:
	\[
		\mathcal F = \{0\}\cup\{m \cdot 2^e \in \Q\}
	\]
	where:
	\begin{itemize}
		\item $e$ is an integer such that $e_{min} \le e \le e_{max}$;
		
		\item $1 \le m < 2$ has one bit before the point, and at most $p - 1$ bits after.
	\end{itemize}
	
	Cf. \cite[2.1]{muller:floating_point_arithmetic}.
\end{definition}

\begin{definition}[Round toward $-\infty$]
	$\text{RD}(x)$ is the largest floating-point number less than or equal to $x \in \R$.
	
	Cf. \cite[2.2.1]{muller:floating_point_arithmetic}.
\end{definition}

\begin{definition}[Round to nearest]
	$\text{RN}(x)$ is the floating-point number that is the closest
	to $x \in \R$. If $x$ falls exactly halfway
	between two consecutive floating-point numbers, it is rounded to the only one of these two whose integral significand is even.
	
	Cf. \cite[2.2.1]{muller:floating_point_arithmetic}.
\end{definition}

\begin{definition}[Next up and next down]
	$\text{nextUp}(x)$ is the least floating-point number that compares greater than $x \in \mathcal F$. $\text{nextDown}(x) := -\text{nextUp}(-x)$.
	
	Cf. \cite[5.3.1]{ieee754}
\end{definition}

\begin{definition}[Signed relative error]
	Given $x \in \R \setminus \{0\}$ and an approximation $\widetilde{x} \in \R$, its signed relative error is:
	\[
		\varepsilon_x := \frac{\widetilde{x} - x}{x}
	\]
	
	From which it derives:
	\[
		\widetilde{x} = x(1 + \varepsilon_x)
	\]
	
	The (unsigned) relative error is simply $|\varepsilon_x|$ (cf. \cite[1.2]{higham:numerical_algorithms}, \cite[2.2.3]{muller:floating_point_arithmetic}).
\end{definition}

\begin{definition}[Model of arithmetic]
	The operations $\oplus, \odot, \oslash \colon \mathcal F \times \mathcal F \to \mathcal F$ are defined as follows:
	
	\begin{itemize}
		\item $x \oplus y = \text{RN}(x + y)$;
		
		\item $x \odot y = \text{RN}(x \cdot y)$;
		
		\item $x \oslash y = \text{RN}\left( \dfrac{x}{y} \right)$.
	\end{itemize}
	
	Cf. \cite[6.1.1]{muller:floating_point_arithmetic}, \cite[2.2]{higham:numerical_algorithms}.
\end{definition}

\subsection{Estimations of relative errors}

\subsubsection{Unit roundoff}

Given $x \in [2^e; 2^{e + 1})$ with $e \in \Z$ and $e_{min} \le e \le e_{max}$, the upper bound of its relative error due to RN rounding mode is:
\[
	\frac{|x - \text{RN}(x)|}{|x|} \le \frac{\frac{1}{2} 2^{e + 1 - p}}{2^e} = 2^{-p} = \textbf{u}
\]

\textbf{u} is called \emph{unit roundoff} (cf. \cite[2.6.4]{muller:floating_point_arithmetic}).

While the upper bound of the relative error due to RD rounding mode is:
\[
	\frac{|x - \text{RD}(x)|}{|x|} \le \frac{2^{e + 1 - p}}{2^e} = 2^{1 - p} = 2\textbf{u}
\]

Now, if we consider $\widetilde{x} = x(1 + \varepsilon_x)$, let's estimate how the relative error becomes if we change $\widetilde{x}$ with $\text{nextUp}(\widetilde{x})$ (or $\text{nextDown}(\widetilde{x})$):
\[
	\begin{split}
		\frac{|x - \text{nextUp}(\widetilde{x})|}{|x|}
		&\le \frac{|(x - \widetilde{x}) + 2^{e + 1 - p}|}{|x|} \\
		&= \left| \frac{x - \widetilde{x}}{x} + \frac{2^{e + 1 - p}}{x} \right| \\
		&\le \left| -\varepsilon_x + \frac{2^{e + 1 - p}}{2^e} \right| \\
		&\le |\varepsilon_x| + 2\textbf{u}
	\end{split}
\]

\subsubsection{Upper bounds for relative errors of arithmetic operations}

Consider $x, y \in (0; 2^{e_{max} + 1})$ and their approximations $\widetilde{x} = x(1 + \varepsilon_x)$ and $\widetilde{y} = y(1 + \varepsilon_y)$, such that $\varepsilon_x, \varepsilon_y \ll 1$ and $\widetilde{x}, \widetilde{y} \in \mathcal F$, and assume that no overflow or underflow occurs for the operations below.

For the sum, the upper bound of the relative error is:
\[
	\begin{split}
		\frac{|(x + y) - (\widetilde{x} \oplus \widetilde{y})|}{|x + y|}
		&= \frac{|(x + y) - [x(1 + \varepsilon_x) + y(1 + \varepsilon_y)] (1 + \varepsilon_s)|}{|x + y|} \\
		&= \frac{|x \varepsilon_x + y \varepsilon_y + x \varepsilon_s + x \varepsilon_x \varepsilon_s + y \varepsilon_s + y \varepsilon_y \varepsilon_s|}{|x + y|} \\
		&\approx \frac{|x \varepsilon_x + y \varepsilon_y + \varepsilon_s (x + y)|}{|x + y|} \\
		&= \left| \frac{x}{x + y} \varepsilon_x + \frac{y}{x + y} \varepsilon_y + \varepsilon_s \right| \\
		&\le |\varepsilon_x| + |\varepsilon_y| + |\varepsilon_s|, \text{ since } x, y \le x + y \\
		&\le |\varepsilon_x| + |\varepsilon_y| + \textbf{u}
	\end{split}
\]

For the product, the upper bound is:
\[
\begin{split}
	\frac{|(x \cdot y) - (\widetilde{x} \odot \widetilde{y})|}{|x \cdot y|}
	&= \frac{|(x \cdot y) - [x(1 + \varepsilon_x) \cdot y(1 + \varepsilon_y)] (1 + \varepsilon_p)|}{|x \cdot y|} \\
	&= |1 - (1 + \varepsilon_x) (1 + \varepsilon_y) (1 + \varepsilon_p)| \\
	&= |\varepsilon_x + \varepsilon_y + \varepsilon_p + \varepsilon_x \varepsilon_y + \varepsilon_x \varepsilon_p + \varepsilon_y \varepsilon_p + \varepsilon_x \varepsilon_y \varepsilon_p| \\
	&\approx |\varepsilon_x + \varepsilon_y + \varepsilon_p| \\
	&\le |\varepsilon_x| + |\varepsilon_y| + |\varepsilon_p| \\
	&\le |\varepsilon_x| + |\varepsilon_y| + \textbf{u}
\end{split}
\]

For the quotient, the upper bound is:
\[
\begin{split}
	\frac{\left| \dfrac{x}{y} - [\widetilde{x} \oslash \widetilde{y}] \right|}{\left| \dfrac{x}{y} \right|}
	&= \frac{\left| \dfrac{x}{y} - \dfrac{x(1 + \varepsilon_x)}{y(1 + \varepsilon_y)} (1 + \varepsilon_q) \right|}{\left| \dfrac{x}{y} \right|} \\
	&= \left| 1 - \dfrac{(1 + \varepsilon_x)}{(1 + \varepsilon_y)} (1 + \varepsilon_q) \right| \\
	&= \left| 1 - \dfrac{(1 + \varepsilon_x) (1 - \varepsilon_y)}{(1 + \varepsilon_y) (1 - \varepsilon_y)} (1 + \varepsilon_q) \right| \\
	&= \left| 1 - \dfrac{(1 + \varepsilon_x) (1 - \varepsilon_y)}{1 - \varepsilon_y^2} (1 + \varepsilon_q) \right| \\
	&\approx \left| 1 - (1 + \varepsilon_x) (1 - \varepsilon_y) (1 + \varepsilon_q) \right| \\
	&\approx |\varepsilon_x - \varepsilon_y + \varepsilon_q| \\
	&\le |\varepsilon_x| + |\varepsilon_y| + |\varepsilon_q| \\
	&\le |\varepsilon_x| + |\varepsilon_y| + \textbf{u}
\end{split}
\]

For the square root, the upper bound is:
\[
\begin{split}
	\frac{\left| \sqrt x - \text{RN}\left( \sqrt{\widetilde{x}} \right) \right|}{|\sqrt x|}
	&= \frac{\left| \sqrt x - \sqrt{x(1 + \varepsilon_x)} (1 + \varepsilon_r) \right|}{|\sqrt x|} \\
	&= \left| 1 - \sqrt{1 + \varepsilon_x} (1 + \varepsilon_r) \right| \\
	&\approx \left| 1 - \sqrt{1 + \varepsilon_x + \left( \frac{\varepsilon_x}{2} \right)^2} (1 + \varepsilon_r) \right| \\
	&= \left| 1 - \left( 1 + \frac{\varepsilon_x}{2} \right) (1 + \varepsilon_r) \right| \\
	&= \left| \frac{\varepsilon_x}{2} + \varepsilon_r + \frac{\varepsilon_x \varepsilon_r}{2} \right| \\
	&\approx \left| \frac{\varepsilon_x}{2} + \varepsilon_r \right| \\
	&\le \frac{|\varepsilon_x|}{2} + |\varepsilon_r| \\
	&\le \frac{|\varepsilon_x|}{2} + \textbf{u}
\end{split}
\]

\subsection{Required precision of floating-point arithmetic}

Assume that the floating-point arithmetic used is accurate enough to represent the integer values between 0 and $b$ exactly, and such that the guess of a digit differs by at most one unit from its actual value.

If the guess is greater than the actual value of the digit, then the remainder is negative, and so it become easier to recognize the rounding error. For this reason, we will ensure that the fraction in \eqref{eq:guess_fromula_final} will always be rounded up.

For performance reasons, assume that the quantity $(R_{i+1} x_{2i+1} x_{2i})_b$ will be rounded down when converted to a floating point number, while the other quantities will be rounded to nearest. This last assumption is made because many systems only implement the "round to nearest" rounding mode.

To ensure the fraction is rounded up, the numerator must be maximized and the numerator minimized, so, in order to use the floating-point arithmetic, \eqref{eq:guess_fromula_final} must become:
\begin{equation}\label{eq:guess_formula_float}
	\widetilde{y_i} = \left\lfloor \text{nextUp}(num \oslash den) \right\rfloor
\end{equation}

where:
\begin{itemize}
	\item $num := \text{nextUp}(\text{RD}( (R_{i+1} x_{2i+1} x_{2i})_b))$
	
	\item $den := \text{nextDown}(sqrt \oplus yShifted)$
	
	\item $sqrt := \text{nextDown}\left( \text{RN} \left( \sqrt{rad} \right) \right)$
	
	\item $rad := \text{nextDown}(yShiftedSqr \oplus \text{RD}( (R_{i+1} x_{2i+1} x_{2i})_b))$;
	
	\item $yShiftedSqr := \text{nextDown}(yShifted \odot yShifted)$;
	
	\item $yShifted := b \odot \text{nextDown}(\text{RN}(Y_{i+1}))$.
\end{itemize}

$b$ is not rounded because it is exactly representable. If we also assume that $b$ is a power of 2, then the product $b \odot \text{nextDown}(\text{RN}(Y_{i+1}))$ is exact, and we can also avoid rounding $yShifted$.

Let's estimate the relative error given by the floating point expression in \eqref{eq:guess_formula_float}:
\[
	\begin{split}
		|\varepsilon_{y_i}|
		&\le (|\varepsilon_{num}| + |\varepsilon_{den}| + \textbf{u}) + 2\textbf{u} \\
		&\le 2\textbf{u} + (|\varepsilon_{sqrt}| + |\varepsilon_{yShifted}| + \textbf{u} + 2\textbf{u}) + \textbf{u} + 2\textbf{u} \\
		&= |\varepsilon_{sqrt}| + |\varepsilon_{yShifted}| + 8\textbf{u} \\
		&\le \left( \frac{|\varepsilon_{rad}|}{2} + \textbf{u} + 2\textbf{u} \right) + (|\varepsilon_b| + (\textbf{u} + 2\textbf{u}) + \textbf{u}) + 8\textbf{u} \\
		&= \frac{|\varepsilon_{rad}|}{2} + |\varepsilon_b| + 15\textbf{u} \\
		&\le \frac{1}{2} (|\varepsilon_{yShiftedSqr}| + 2\textbf{u} + \textbf{u} + 2\textbf{u}) + 15\textbf{u} \\
		&\le \frac{1}{2} ((2|\varepsilon_{yShifted}| + \textbf{u} + 2\textbf{u}) + 2\textbf{u} + \textbf{u} + 2\textbf{u}) + 15\textbf{u} \\
		&\le \frac{1}{2} (2(|\varepsilon_b| + (\textbf{u} + 2\textbf{u}) + \textbf{u}) + \textbf{u} + 2\textbf{u} + 2\textbf{u} + \textbf{u} + 2\textbf{u}) + 15\textbf{u} \\
		&= \frac{1}{2} \cdot 16\textbf{u} + 15\textbf{u} \\
		&= 23\textbf{u} \\
	\end{split}
\]

Now, to ensure that the guess of a digit differs by at most one unit from its actual value, we impose that the greatest value of the absolute error must be within one unit. The value of the absolute error is maximum when the fraction assumes the greatest possible value, i.e. $b$:
\[
	\begin{split}
		&23\textbf{u} \cdot b \le \delta, \text{ with } \delta \in (0; 1) \\
		&\iff 23 \cdot 2^{-p} \cdot b \le \delta \\
		&\iff p \ge \log_2 b + \log_2 \delta^{-1} + \log_2 23
	\end{split}
\]

In the underlying Java implementation, we take $p = 53$ and $b = 2^{32}$, thus the absolute error is within $\delta = 23 \cdot 2^{-53} \cdot 2^{32} \approx 1.1 \cdot 10^{-5}$. It can be thought as the probability that $y_i = \widetilde{y_i} - 1$ holds.

\textbf{Note:} although $\delta$ can be made as small as desired, it can never be zero (this would require infinite precision), so, until proven the contrary (by exhaustive search, for example), in general the possibility of an error in guessing the digit is unavoidable, hence it must be taken into account.

If the remainder computed using $\widetilde{y_i}$ is negative, then $y_i = \widetilde{y_i} - 1$, so the quantity to add to it is:
\[
	\begin{split}
		[2b Y_{i+1} (y_i + 1) + (y_i + 1)^2] - [2b Y_{i+1} y_i + y_i^2]
		&= 2b Y_{i+1} (y_i + 1) + (y_i + 1)^2 - 2b Y_{i+1} y_i - y_i^2 \\
		&= 2b Y_{i+1} [(y_i + 1) - y_i] + (y_i + 1)^2 - y_i^2 \\
		&= 2b Y_{i+1} + y_i^2 + 2y_i + 1 - y_i^2 \\
		&= 2b Y_{i+1} + 2y_i + 1 \\
	\end{split}
\]

\section{A Java implementation of the algorithm}

In this implementation, multiprecision integers (the class "BigInt") are represented as in the Java's standard library: by arrays of 32-bit signed integers, although they are treated as if they were unsigned. Basically, each element of the array represents a digit in $2^{32}$-radix. The digits are in big-endian order (from the most to the least significant, starting from index zero), and there are no leading zeros. The arrays are indexed by 32-bit signed integers, so the maximum array length is $2^{31} - 1$.

Floating-point numbers (the subclass "Floating") are unsigned and represented by two fields: the exponent and the significand. Numbers are normalized: if the significand is non-zero, it must be at least one and strictly less than 2; if it is zero, the exponent must have the minimum representable value.

Floating-point overflows are avoided by using 64-bit signed integers (the type "long") to represent the exponents of the floating-point numbers, while the significand is represented using the built-in double precision floating-point numbers of Java (the type "double"), which reflect the standard "binary64" format of IEEE-754 (cf. \cite{ieee754}). Floating-point underflows are taken into account only where they could happen.

\begin{lstlisting}[language = Java]
	package sqrtTest;
	
	import java.math.BigInteger;
	import java.util.Arrays;
	
	public class BigInt {
		
		private final int[] mag;
		
		/**
		* The length of this integer in bits, or -1 if not computed.
		* To get the value, call {@link #bitLength()} method.
		*/
		private int bitLength;
		
		private static final long LONG_MASK = 0xffffffffL;
		
		private static final BigInt ZERO = new BigInt(new int[0], true);
		
		private BigInt(int[] mag, boolean trusted) {
			this.mag = trusted ? mag : stripLeadingZeros(mag, 0);
			this.bitLength = -1;
		}
		
		BigInt(int[] mag) {
			this(mag, false);
		}
		
		public BigInt(String s) {
			this(computeMag(s), true);
		}
		
		private static int[] computeMag(String s) {
			String bin = new BigInteger(s).toString(2);
			if (bin.equals("0"))
			return new int[0];
			
			final int len = bin.length();
			int[] mag = new int[((len - 1) >> 5) + 1];
			int offset = len & 0x1f;
			if (offset == 0)
			offset = 32;
			
			mag[0] = Integer.parseUnsignedInt(bin.substring(0, offset), 2);
			for (int i = 1; i < mag.length; i++, offset += 32)
			mag[i] = Integer.parseUnsignedInt(bin.substring(offset, offset + 32), 2);
			
			return mag;
		}
		
		public BigInt[] sqrtAndRemainder() {
			if (mag.length == 0)
				return new BigInt[] { ZERO, ZERO };
			
			int[] r = new int[mag.length];
			int[] sqrt = new int[(mag.length + 1) >>> 1];
			int rBegin = 0, rEnd;
			long digit;
			final long b = 0x100000000L; // 2^32
			
			// first iteration
			if ((mag.length & 1) == 0) { // even length
				// ensure round up of digit
				digit = (long) Math.nextUp(
				Math.sqrt(Math.nextUp((mag[0] & LONG_MASK) * (double) b + (mag[1] & LONG_MASK))));
				
				if (digit == b) // avoid overflow due to too high digit's rounding
				digit--;
				
				final long digitSqr = digit * digit;
				long diff = (mag[1] & LONG_MASK) - (digitSqr & LONG_MASK);
				r[1] = (int) diff;
				diff = (mag[0] & LONG_MASK) - (digitSqr >>> 32) + (diff >> 32);
				r[0] = (int) diff;
				rEnd = 2;
				
				// Check for the unlikely case of too high digit's rounding
				if (diff >> 32 != 0) {
					digit--;
					// add (digit + 1)^2 - digit^2 == 2*digit + 1
					final long sum = (r[1] & LONG_MASK) + (digit << 1) + 1;
					r[1] = (int) sum;
					r[0] = (int) ((r[0] & LONG_MASK) + (sum >>> 32));
				}
			} else { // odd length
				// the integer sqrt is exact, no need to round up
				digit = (long) Math.sqrt(mag[0] & LONG_MASK);
				
				final long diff = (mag[0] & LONG_MASK) - digit * digit;
				r[0] = (int) diff;
				rEnd = 1;
			}
			sqrt[0] = (int) digit; // store digit
			
			Floating sqrtF = new Floating(digit);
			
			// next iterations
			for (int sqrtLen = 1; sqrtLen < sqrt.length; sqrtLen++) {
				// 1. Update remainder's begin
				for (; rBegin < rEnd && r[rBegin] == 0; rBegin++);
				
				// 2. Append the next block to the remainder
				r[rEnd] = mag[rEnd];
				r[rEnd + 1] = mag[rEnd + 1];
				rEnd += 2;
				
				// 3. Guess the next digit
				// compute remainder's floating values
				final Floating rFloor = Floating.valueOf(r, rBegin, rEnd), rCeil = rFloor.nextUp();
				
				sqrtF = sqrtF.scaleByPowerOf2(32);
				// ensure round up of digit
				digit = rCeil.div(
				(sqrtF.mul(sqrtF).nextDown().add(rFloor).nextDown()).sqrt().nextDown()
				.add(sqrtF).nextDown()).nextUp().toLong();
				
				if (digit != 0) {
					// 4. Compute the remainder
					if (digit == b) // avoid overflow due to too high digit's rounding
					digit--;
					
					digit = computeRemainder(digit, sqrt, sqrtLen, r, rBegin, rEnd);
					
					// 5. Add the digit to sqrt's floating value, if its bits might be representable
					if (sqrtLen <= 2) {
						if (sqrtLen == 2)
						sqrtF = sqrtF.nextUp(); // restore the correct rounding
						
						sqrtF = sqrtF.add(new Floating(digit)).nextDown(); // ensure round up of next digits
					}
					
					// 6. Store digit
					sqrt[sqrtLen] = (int) digit;
				}
			}
			
			return new BigInt[] { new BigInt(sqrt, true), new BigInt(stripLeadingZeros(r, rBegin), true) };
		}
		
		private static long computeRemainder(long digit, int[] sqrt, final int sqrtLen,
		int[] r, final int rBegin, final int rEnd) {
			// Subtract common parts
			int rPos = rEnd - 1;
			long prod = digit * digit;
			long diff = (r[rPos] & LONG_MASK) - (prod & LONG_MASK);
			r[rPos--] = (int) diff;
			
			final long twiceDigit = (digit << 1) & LONG_MASK;
			final boolean high = ((int) digit) < 0;
			
			// rEnd >= 2*sqrtLen + 1, no risk that rPos becomes negative
			for (int i = sqrtLen - 1; i >= 0; i--, rPos--) {
				prod = twiceDigit * (sqrt[i] & LONG_MASK) + (prod >>> 32);
				if (high)
				prod += sqrt[i + 1] & LONG_MASK;
				
				diff = (r[rPos] & LONG_MASK) - (prod & LONG_MASK) + (diff >> 32);
				r[rPos] = (int) diff;
			}
			
			// Borrow propagation
			if (rPos >= rBegin) {
				prod >>>= 32;
				if (high)
				prod += sqrt[0] & LONG_MASK;
				
				diff = (r[rPos] & LONG_MASK) - prod + (diff >> 32);
				r[rPos--] = (int) diff;
				
				if (rPos >= rBegin) {
					diff = (r[rPos] & LONG_MASK) + (diff >> 32);
					r[rPos] = (int) diff;
				}
				// The end: rEnd - rBegin <= sqrtLen + 3
			}
			
			// Check for the unlikely case of too high digit's rounding
			if (diff >> 32 != 0) {
				digit--;
				
				// Add common parts
				rPos = rEnd - 1;
				long sum = (r[rPos] & LONG_MASK) + (digit << 1) + 1;
				r[rPos--] = (int) sum;
				
				for (int i = sqrtLen - 1; i >= 0; i--, rPos--) {
					sum = (r[rPos] & LONG_MASK) + ((sqrt[i] & LONG_MASK) << 1) + (sum >>> 32);
					r[rPos] = (int) sum;
				}
				
				// Carry propagation
				if (rPos >= rBegin) {
					sum = (r[rPos] & LONG_MASK) + (sum >>> 32);
					r[rPos--] = (int) sum;
					
					if (rPos >= rBegin)
					r[rPos] = 0; // remainder's length can't be larger than sqrtLen + 2
					// The end: rEnd - rBegin <= sqrtLen + 3
				}
			}
			
			return digit;
		}
		
		private static class Floating {
			/**
			* <p> {@code this == signif * 2^exp}
			* <p> if {@code this != 0}, then {@code 1 <= signif < 2}
			*/
			final double signif;
			final long exp;
			
			static final Floating ZERO = new Floating(0.0, Long.MIN_VALUE);
			
			static final Floating MIN_VALUE = new Floating(1.0, ZERO.exp);
			
			private Floating(double signif, long exp) {
				this.signif = signif;
				this.exp = exp;
			}
			
			/**
			* Assume n is unsigned
			*/
			Floating(long n) {
				if (n == 0) {
					signif = ZERO.signif;
					exp = ZERO.exp;
				} else {
					exp = 63 - Long.numberOfLeadingZeros(n);
					final double val = n >= 0 ? n : n + 0x1p64;
					signif = exp == 0L ? val : val * Double.parseDouble("0x1p-" + exp);
				}
			}
			
			/**
			* For reasons of speed, if the argument cannot be represented exactly,
			* the magnitude is simply truncated: least significant bits are just discarded.
			*/
			static Floating valueOf(int[] mag, int begin, int end) {
				final int len = end - begin;
				if (len == 0)
				return ZERO;
				
				if (len <= 2) {
					long val = mag[begin] & LONG_MASK;
					if (len == 2)
					val = (val << 32) | (mag[begin + 1] & LONG_MASK);
					
					return new Floating(val);
				}
				
				final int bitLen = (len << 5) - Integer.numberOfLeadingZeros(mag[begin]);
				
				/*
				* We need the top PRECISION bits, including the "implicit"
				* one bit. signif will contain the top PRECISION bits.
				*/
				int shift = bitLen - Double.PRECISION;
				int shift2 = -shift;
				
				int highBits, lowBits;
				if ((shift & 0x1f) == 0) {
					highBits = mag[begin];
					lowBits = mag[begin + 1];
				} else {
					highBits = mag[begin] >>> shift;
					lowBits = (mag[begin] << shift2) | (mag[begin + 1] >>> shift);
					if (highBits == 0) {
						highBits = lowBits;
						lowBits = (mag[begin + 1] << shift2) | (mag[begin + 2] >>> shift);
					}
				}
				
				long signif = ((highBits & LONG_MASK) << 32) | (lowBits & LONG_MASK);
				signif ^= 1L << (Double.PRECISION - 1); // remove the implied bit
				
				long bits = ((long) Double.MAX_EXPONENT // add bias
				<< (Double.PRECISION - 1)) | signif;
				return new Floating(Double.longBitsToDouble(bits), bitLen - 1);
			}
			
			@Override
			public boolean equals(Object obj) {
				if (!(obj instanceof Floating f))
				return false;
				
				return this.signif == f.signif && this.exp == f.exp;
			}
			
			long toLong() {
				if (signif == 0.0)
				return 0L;
				
				long twoTo52 = 1L << (Double.PRECISION - 1);
				long bits = Double.doubleToRawLongBits(signif);
				bits = bits & (twoTo52 - 1) | twoTo52;
				
				long shift = exp - (Double.PRECISION - 1);
				return shift >= 0 ? bits << shift : bits >> -shift;
			}
			
			Floating nextUp() {
				if (signif == 0.0)
				return MIN_VALUE;
				
				double resSignif = Math.nextUp(signif);
				long resExp;
				
				if (resSignif >= 2.0) {
					resSignif /= 2.0;
					resExp = exp + 1;
				} else {
					resExp = exp;
				}
				
				return new Floating(resSignif, resExp);
			}
			
			Floating nextDown() {
				if (signif == 0.0 || this.equals(MIN_VALUE))
				return ZERO;
				
				double resSignif = Math.nextDown(signif);
				long resExp;
				
				if (resSignif < 1.0) {
					resSignif *= 2.0;
					resExp = exp - 1;
				} else {
					resExp = exp;
				}
				
				return new Floating(resSignif, resExp);
			}
			
			/**
			* @return this * 2^n
			*/
			Floating scaleByPowerOf2(long n) {
				if (signif == 0.0)
				return ZERO;
				
				return new Floating(signif, exp + n);
			}
			
			Floating add(Floating x) {
				if (this.signif == 0.0)
				return x;
				
				if (x.signif == 0.0)
				return this;
				
				Floating big, little;
				
				if (this.exp > x.exp) {
					big = this;
					little = x;
				} else {
					big = x;
					little = this;
				}
				
				// result == 2^big.exp * (big.signif + little.signif * 2^(little.exp - big.exp))
				long resExp = big.exp;
				double resSignif = big.signif;
				
				final long scale = little.exp - big.exp;
				// if (little.signif * 2^(little.exp - big.exp) < 2^-53)
				// then little is too small and does not affect the result
				if (scale >= -Double.PRECISION) {
					if (scale == 0) {
						resSignif += little.signif;
					} else {
						resSignif += little.signif * Double.parseDouble("0x1p" + scale);
					}
					
					// carry propagation
					if (resSignif >= 2.0) {
						resSignif /= 2.0;
						resExp++;
					}
				}
				
				return new Floating(resSignif, resExp);
			}
			
			Floating mul(Floating x) {
				if (this.signif == 0.0 || x.signif == 0.0)
				return ZERO;
				
				long resExp = this.exp + x.exp;
				double resSignif = this.signif * x.signif;
				
				// carry propagation
				if (resSignif >= 2.0) {
					resSignif /= 2.0;
					resExp++;
				}
				
				return new Floating(resSignif, resExp);
			}
			
			Floating div(Floating x) {
				if (this.signif == 0.0)
				return ZERO;
				
				long resExp = this.exp - x.exp;
				// Exponent underflow iff the exponents have different signs and
				// the sign of resExp is different from the sign of this.exp
				if (((this.exp ^ x.exp) & (this.exp ^ resExp)) < 0)
				return ZERO;
				
				double resSignif = this.signif / x.signif;
				
				// borrow propagation
				if (resSignif < 1.0) {
					if (resExp == Long.MIN_VALUE) // Exponent underflow
					return ZERO;
					
					resSignif *= 2.0;
					resExp--;
				}
				
				return new Floating(resSignif, resExp);
			}
			
			Floating sqrt() {
				if (this.signif == 0.0)
				return ZERO; 
				
				if ((exp & 1) == 1)
				return new Floating(Math.sqrt(2.0 * signif), (exp - 1) >> 1);
				
				return new Floating(Math.sqrt(signif), exp >> 1);
			}
		}
		
		public int bitLength() {
			if (bitLength == -1)
			bitLength = computeBitLength(mag, 0, mag.length);
			
			return bitLength;
		}
		
		private static int computeBitLength(int[] a, int offset, int len) {
			return len == 0 ? 0 : (len << 5) - Integer.numberOfLeadingZeros(a[offset]);
		}
		
		@Override
		public String toString() {
			return new BigInteger(toBinaryString(), 2).toString();
		}
		
		public String toBinaryString() {
			return toBinaryString(mag, 0, mag.length);
		}
		
		private static String toBinaryString(int[] mag, int begin, int end) {
			int len = end - begin;
			StringBuilder sb = new StringBuilder(len == 0 ? 1 : computeBitLength(mag, begin, len));
			
			if (len == 0) {
				sb.append('0');
			} else {
				sb.append(Integer.toBinaryString(mag[begin]));
				for (int i = begin + 1; i < end; i++) {
					sb.append("0".repeat(Integer.numberOfLeadingZeros(mag[i])));
					if (mag[i] != 0)
					sb.append(Integer.toBinaryString(mag[i]));
				}
			}
			
			return sb.toString();
		}
		
		private static int[] stripLeadingZeros(int[] a, int from) {
			for (; from < a.length && a[from] == 0; from++);
			return from == 0 ? a : from == a.length ? new int[0] : Arrays.copyOfRange(a, from, a.length);
		}
	}
\end{lstlisting}

\begin{lstlisting}[language = java]
	package sqrtTest;
	
	import java.math.BigInteger;
	import java.util.Random;
	import java.util.random.RandomGenerator;
	
	public class SqrtTest {
		
		public static void main(String[] args) {
			int nWords = 0b1000000000000000;
			Random rand = Random.from(RandomGenerator.getDefault());
			long t0, t1;
			long nTests = 1_000;
			long timeSum1 = 0, timeSum2 = 0;
			
			
			for (long n = 0; n < nTests; n++) {
				BigInteger x2 = new BigInteger(nWords << 5, rand);
				BigInt x1 = new BigInt(x2.toString());
				
				t0 = System.nanoTime();
				BigInt[] sqrtRem1 = x1.sqrtAndRemainder();
				t1 = System.nanoTime();
				long dt = t1 - t0;
				timeSum1 += dt;
				
				t0 = System.nanoTime();
				BigInteger[] sqrtRem2 = x2.sqrtAndRemainder();
				t1 = System.nanoTime();
				dt = t1 - t0;
				timeSum2 += dt;
				
				if (!sqrtRem1[0].toString().equals(sqrtRem2[0].toString()) || !sqrtRem1[1].toString().equals(sqrtRem2[1].toString())) {
					System.err.println("Inexact result:");
					System.err.println("sqrt = " + sqrtRem1[0] + ", rem = " + sqrtRem1[1]);
				}
			}
			
			double mean1 = (double) timeSum1 / nTests;
			double mean2 = (double) timeSum2 / nTests;
			System.out.println("avg time 1 = " + Math.round(mean1));
			System.out.println("avg time 2 = " + Math.round(mean2));
		}
	}
\end{lstlisting}

\subsection{Testing results}

Results of 1,000 tests for each input size:
\\

\begin{tabular}{|c||c|c|}
	\hline
	Input size (32-bit words) & paper's algorithm avg time (ns) & JDK22 avg time (ns) \\
	\hline
	\hline
	    0 &         583 &         3,128 \\
	    1 &      50,985 &        15,483 \\
	    2 &      53,034 &        22,855 \\
	    4 &      68,548 &        54,614 \\
	    8 &      86,643 &        93,587 \\
	   16 &      84,460 &       113,567 \\
	   32 &      89,293 &       136,668 \\
	   64 &      92,274 &       293,345 \\
	  128 &      96,131 &       957,569 \\
	  256 &     145,257 &     2,423,880 \\
	  512 &     240,063 &     4,986,413 \\
	 1024 &     676,682 &    13,758,715 \\
	 2048 &   2,014,007 &    34,831,680 \\
	 4096 &   7,150,738 &    99,797,550 \\
	 8192 &  25,017,604 &   284,323,520 \\
	16384 &  93,430,151 &   795,028,890 \\
	32768 & 399,697,826 & 2,297,631,960 \\
	\hline
\end{tabular}
\\

\begin{tikzpicture}
	\begin{axis} [
			title = The same data on cartesian chart,
			width = 12cm, height = 12cm,
			xlabel = Input size (32-bit words),
			ylabel = average running time ($10^{-9}$ s),
			no markers
		]
		\legend {
			\textsc{Paper's implementation},
			\textsc{JDK22 implementation}
		}
		\addplot[smooth, blue] coordinates {
			(0, 583)
			(1, 50985)
			(2, 53034)
			(4, 68548)
			(8, 86643)
			(16, 84460)
			(32, 89293)
			(64, 92274)
			(128, 96131)
			(256, 145257)
			(512, 240063)
			(1024, 676682)
			(2048, 2014007)
			(4096, 7150738)
			(8192, 25017604)
			(16384, 93430151)
			(32768, 399697826)
		};
		\addplot[smooth, red] coordinates {
			(0, 3128)
			(1, 15483)
			(2, 22855)
			(4, 54614)
			(8, 93587)
			(16, 113567)
			(32, 136668)
			(64, 293345)
			(128, 957569)
			(256, 2423880)
			(512, 4986413)
			(1024, 13758715)
			(2048, 34831680)
			(4096, 99797550)
			(8192, 284323520)
			(16384, 795028890)
			(32768, 2297631960)
		};
	\end{axis}
\end{tikzpicture}

\nocite{*}
\printbibliography[heading=bibintoc]

\end{document}